\RequirePackage{fix-cm}

\documentclass[smallextended]{svjour3}       
\smartqed  
\usepackage{graphicx}

\usepackage{tikz}
\usetikzlibrary{matrix}%
\usepackage[inline]{enumitem}
\usepackage{amssymb}
\usepackage[nofillcomment,vlined,linesnumbered,titlenumbered,algo2e,ruled]{algorithm2e}

\providecommand{\DontPrintSemicolon}{\dontprintsemicolon}
\providecommand{\LinesNumbered}{\linesnumbered}

\begin{document}

\title{Efficient enumeration of chordless cycles\thanks{The first author was partially supported by FAPEG -- Funda\c{c}\~{a}o de Amparo \`{a} Pesquisa do Estado de Goi\'{a}s. The last author was supported by CAPES -- Coordena\c{c}\~{a}o de Aperfei\c{c}oamento de Pessoal de N\'{i}vel Superior.}
}

\author{
        Elis\^{a}ngela S. Dias%
   \and Diane Castonguay%
   \and Humberto Longo%
   \and Walid A. R. Jradi%
        }

\institute{Elis\^{a}ngela S. Dias \and Diane Castonguay \and Humberto Longo \and Walid A.R. Jradi \at
              Instituto de Inform\'{a}tica, Universidade Federal de Goi\'{a}s, Campus Samambaia, Goi\^{a}nia, Goi\'{a}s, Brazil \\
              Tel.: +55-62-35211181\\
              Fax: +55-62-35211182\\
              \email{\{elisangela, diane, longo, walid.jradi\}@inf.ufg.br}           
}

\date{Received: date / Accepted: date}

\maketitle

\begin{abstract}
In a finite undirected simple graph, a {\it chordless cycle} is an induced subgraph which is a cycle. We propose two algorithms to enumerate all chordless cycles of such a graph. Compared to other similar algorithms, the proposed algorithms have the advantage of finding each chordless cycle only once. To ensure this, we introduced the concepts of vertex labeling and initial valid vertex triplet. To guarantee that the expansion of a given chordless path will always lead to a chordless cycle, we use a breadth-first search in a subgraph obtained by the elimination of many of the vertices from the original graph. The resulting algorithm has time complexity $\mathcal{O}(n + m)$ in the output size, where $n$ is the number of vertices and $m$ is the number of edges.

\keywords{Graphs \and Chordless Cycles \and Efficient Algorithm \and Enumeration}
\end{abstract}


\section{Introduction}
\label{introduction}

Given a finite undirected simple graph $G$, a {\it chordless cycle} is an induced subgraph that is a cycle. That is, a closed sequence of vertices in $G$ such that each two adjacent vertices in the sequence are connected by an edge in $G$ and each two non-adjacent vertices in the sequence are not connected by any edge in $G$. A chordless cycle with four of more edges in termed a {\it hole}.

A solution to the problem of determining whether or not a graph contains a chordless cycle with $k \geq 4$ vertices or more, for some fixed value of $k$, was proposed by Hayward~\cite{H1987}. Golumbic~\cite{G1980} proposed an algorithm to recognize chordal graphs, that is graphs without any chordless cycles. The case for $k \geq 5$ was settled by Nikolopoulos and Palios~\cite{NP2007}. However, finding any chordless cycle with a given length $k$ is easier than finding all chordless cycles in a graph $G$.

Enumeration is a fundamental task in computer science and many algorithms have been proposed for enumerating graph structures such as cycles~\cite{DK1995,FGMPRS2012,LT1982,LW2006,RT1975,SS2007,W2008}, circuits~\cite{B2010,T1973}, paths~\cite{HH2006,RT1975}, trees~\cite{KR2000,RT1975} and cliques~\cite{MU2004,TTT2006}. Due to the number of cycles -- which can be exponentially large -- these kind of tasks are usually hard to deal with, since even a small graph may contain a huge number of such structures. Nevertheless, enumeration is necessary in many practical problems. For example, cycle enumeration is useful for the analysis of the World Wide Web and social networks, where the number of cycles can be used to identify connectivity patterns in a network.

Pfaltz~\cite{P2013} showed that chordless cycles effectively characterize connectivity structures of networks as a whole. Chordless cycles are used to better understand ecological networks structures, such as food webs, where goal is to discover the predators that compete for the same prey \cite{SBBHN2012}. To achieve this aim, the directed graph of a food web is transformed into a niche-overlap graph to highlight the competition between species. The lack of chordless cycles in the transformed graph means that the species can be rearranged as a single hierarchy. Another application is the nature of structure-property relationships in some chemical compounds that are related to the presence of chordless cycles~\cite{G2001}.

Wild~\cite{W2008} proposed an algorithm to list all subsets of cycles of cardinality at most five, using the principle of exclusion. The algorithm can be easily adapted to list only the chordless cycles. Unfortunately, the author did not present a complexity analysis of the algorithm and it is easy to see that it has a high asymptotic time complexity.

An algorithm that enumerates all chordless cycles is described by Sokhn et al.~\cite{SBBHN2012}. The general principle of this algorithm is to use a vertex ordering and to expand paths from each vertex using a depth-first search (DFS) strategy. This approach has the disadvantage of finding twice each chordless cycle.

An algorithm to enumerate chordless cycles, with $\mathcal{O}(n + m)$ time complexity in the output size, was proposed by Uno and Satoh~\cite{U2014} and, as the algorithm of Sokhn et al.~\cite{SBBHN2012}, each chordless cycle will appears more than once in the output. Actually, each cycle will appear as many times as its length. Thus, the algorithm has $\mathcal{O}(n \cdot (n + m))$ time complexity in size of the sum of lengths of all the chordless cycles in the graph, where $n$ and $m$ are the number of vertices and edges, respectively. 

We propose two algorithms to enumerate all chordless cycles of a given graph $G$, with $\mathcal{O}(n+m)$ time complexity in the output size, with the advantage of finding each chordless cycle only once. The core idea of our algorithms is to use a vertex labeling scheme, with which any arbitrary cycle can be described in a unique way. With this, we generate an initial set of vertex triplets and use a DFS strategy to find all the chordless cycles. Our approach guarantees that each chordless cycle is found only once. We would like to clarify that our method, even though based on similar ideas of those presented by Sokhn et al.~\cite{SBBHN2012}, was developed independently and runs significantly faster. 

The remainder of the paper is organized as follows: some preliminaries definitions and comments are presented in Section~\ref{preliminaries}; our algorithms are introduced in Section~\ref{sequential}; Section~\ref{tests} describes the experimental tests and results produced by the new algorithms compared to other methods; finally, in Section~\ref{conclusions} we draw our conclusions. A detailed description of the algorithm is given in the Appendix.


\section{Preliminaries}
\label{preliminaries}

Let $G = (V, E)$ be a finite undirected simple graph with vertex set $V$ and edge set $E$. Let $n=|V|$ and $m=|E|$. We denote by $Adj(x)$ the set of neighbors of a vertex $x \in V$, that is, $Adj(x) = \{y \in V \>|\> (x, y) \in E\}$, and by $Adj[x] = \{x\} \cup Adj(x)$ the closed neighborhood of vertex $x$.

A \textit{simple path} is a finite sequence of vertices $\langle v_1, v_2, \dots, v_k \rangle$ such that $(v_i, v_{i+1})\allowbreak \in E$ and $v_i\neq v_j$, for each $i = 1, \dots, k$ and all $j\neq i$, $j = 1, \dots, k-1$. A \textit{cycle} is a simple path $\langle v_1, v_2, \dots, v_k \rangle$ such that $(v_k, v_1)\in E$. We denote a cycle with $k$ vertices by $C_k$. Note that our definition of cycle does not repeat the first vertex at the end of the sequence as usually done. We decided to use this definition (with the first vertex implicitly included at the end) because it simplifies the representation of a rotated version of the cycles. Note that if $\langle v_1, v_2, \dots, v_k \rangle$ is a cycle, so also are $\langle v_i, v_{i + 1} \dots, v_k, v_1, v_2, \dots, v_{i-1} \rangle$ and $\langle v_i, v_{i - 1}, \dots, v_2, v_1, v_k, \dots, v_{i+1} \rangle$, for all $i = 1, \dots, k$. A {\it chord} of a path (resp. cycle) is an edge between two vertices of the path (cycle), that is not part of the path (cycle). A path (cycle) without chord is called a {\it chordless path (chordless cycle)}. 

The minimum degree among all vertices of $G$ is denoted by $\delta(G)$; the maximum degree is denoted by $\Delta(G)$; and $degree_{G}(v)$ denotes the degree of a particular vertex $v\in V$. We will denote by $G - X$ ($G - u$) the subgraph induced by the subset $V - \{X\}$, for $X \subseteq V$ ($V - \{u\}$, for $u \in V$). Similar to Chandrasekharan, Laskshmanan and Medidi \cite{CLM1993}, we give a new characterization of graphs having chordless cycles of length $k\geq 4$ in Lemma~\ref{lemma_graph_charac} below.

\begin{lemma}
\label{lemma_graph_charac}
Given $t \geq 4$, a graph $G$ has a chordless cycle $C_s$, $s \geq t$, if and only if there exists a chordless path $\langle u_1, u_2, \ldots, u_t\rangle$ such that $u_1$ and $u_t$ are in the same connected component of
\begin{equation}
G' = G - \left(\bigcup^{t-1}_{i=2} Adj[u_i] - \{u_1, u_t\}\right).
\end{equation}
\end{lemma}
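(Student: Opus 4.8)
The plan is to prove the two directions of the equivalence separately, in each case producing the desired object by concatenating the chordless path $P = \langle u_1, \ldots, u_t \rangle$ with a suitable path living in the pruned graph $G'$. The guiding observation is that $G'$ is engineered precisely so that any vertex surviving in it, other than $u_1$ and $u_t$, is both distinct from and non-adjacent to every internal vertex $u_2, \ldots, u_{t-1}$ of $P$; this is exactly what will forbid chords when $P$ is closed into a cycle.

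For the reverse implication $(\Leftarrow)$, assume $u_1$ and $u_t$ lie in the same connected component of $G'$. First I would take a shortest path $Q = \langle u_1 = w_0, w_1, \ldots, w_r = u_t \rangle$ between them inside $G'$; being shortest, $Q$ is an induced path of $G'$, hence also an induced path of $G$ because $G'$ is an induced subgraph. Since $P$ is chordless and $t \ge 4$, the endpoints $u_1, u_t$ are non-adjacent, so $r \ge 2$, and $Q$ meets $P$ only at $u_1, u_t$ (indeed $w_1, \ldots, w_{r-1} \in V(G')$ whereas $u_2, \ldots, u_{t-1} \notin V(G')$, so these vertices are automatically distinct). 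I would then form the closed walk $C = \langle u_1, u_2, \ldots, u_t, w_{r-1}, \ldots, w_1 \rangle$, a genuine cycle of length $t + r - 1 \ge t$, and argue it is chordless by a case analysis.

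The crux of the argument, and the step I expect to demand the most care, is verifying that $C$ has no chord. I would partition the potential chords into three types: an edge with both ends in $\{u_1, \ldots, u_t\}$ is excluded because $P$ is chordless; an edge with both ends among the $Q$-vertices is excluded because $Q$ is induced; and an edge joining an internal vertex $u_i$ (with $2 \le i \le t-1$) to an internal vertex $w_j$ (with $1 \le j \le r-1$) is excluded precisely because $w_j$ survived in $G' = G - \bigl(\bigcup_{i=2}^{t-1} Adj[u_i] - \{u_1, u_t\}\bigr)$ while $w_j \notin \{u_1, u_t\}$, forcing $w_j \notin Adj[u_i]$. Any remaining pair involves $u_1$ or $u_t$, which belong to both $P$ and $Q$, and hence falls into one of the first two types. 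This exhausts all cases and yields the chordless cycle $C_s$ with $s = t + r - 1 \ge t$.

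For the forward implication $(\Rightarrow)$, assume $G$ contains a chordless cycle $\langle v_1, \ldots, v_s \rangle$ with $s \ge t$. I would set $u_i = v_i$ for $1 \le i \le t$; the subpath $\langle v_1, \ldots, v_t \rangle$ inherits chordlessness from the cycle. It then remains to exhibit a $u_1$--$u_t$ path inside $G'$, for which I would use the complementary arc $\langle v_t, v_{t+1}, \ldots, v_s, v_1 \rangle$ of the cycle. Every vertex $v_j$ with $t+1 \le j \le s$ is distinct from and non-adjacent to each $v_i$ with $2 \le i \le t-1$ (their indices differ by at least two and they are not the wrap-around pair $\{1, s\}$, so the chordless cycle forbids the edge), whence $v_j \in V(G')$; together with the retained endpoints $v_1, v_t$ this places the whole arc in $G'$, connecting $u_1$ and $u_t$. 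The degenerate case $s = t$ is immediate, since then the wrap-around edge $v_1 v_t$ itself lies in $G'$.
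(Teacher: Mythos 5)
Your proof of the backward direction is correct and follows the same route as the paper's: take a shortest $u_1$--$u_t$ path $Q$ in $G'$, observe that it is induced and meets $P$ only at its endpoints, and rule out chords of the concatenated cycle by the three-way case split (both ends in $P$, both ends in $Q$, one end in each). Your write-up is in fact more explicit than the paper's, which phrases the ``both ends in $Q$'' case as a shortcut contradiction with minimality and dismisses the mixed case in a single clause. Your forward direction likewise supplies the detail the paper compresses into ``clearly satisfies the required condition,'' namely that the complementary arc $\langle v_t, v_{t+1}, \ldots, v_s, v_1\rangle$ survives into $G'$ because a chordless cycle has no edges between non-consecutive vertices.

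The one genuine gap is your treatment of $s = t$. You assert that the subpath $\langle v_1, \ldots, v_t\rangle$ ``inherits chordlessness from the cycle,'' and you isolate $s=t$ only to note that the wrap-around edge $v_1v_t$ then supplies the connection in $G'$. But under the paper's definition a chord of a path is any edge between two of its vertices that is not an edge of the path, so when $s = t$ the closing edge $(v_1,v_t)$ of the cycle is itself a chord of the path $\langle v_1, \ldots, v_t\rangle$: that subpath is not chordless and cannot serve as the witness. No repair is available within the statement as given: for $G = C_4$ and $t = 4$ the graph has a chordless cycle of length $t$ yet contains no induced path on $t$ vertices at all, so the necessity direction of the lemma is actually false at $s=t$. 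This is not a defect you introduced --- the paper's one-line necessity argument has exactly the same flaw --- and both your argument and the paper's really prove the corrected statement with $s \geq t+1$; indeed your own observation that $r \geq 2$ already gives $s = t + r - 1 \geq t+1$ in the backward direction. Since the algorithm only uses the sufficiency direction, and triangles and the initial triplets are handled separately, the slip is harmless downstream, but your forward direction should either be restricted to $s > t$ or accompanied by a restatement of the lemma.
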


\begin{proof}
(Sufficiency) Suppose a chordless path $p = \langle u_1, u_2, \ldots, u_t\rangle$ such that $u_1$ and $u_t$ are in the same connected component of $G'$. Let $q = \langle v_1, v_2, \ldots, v_k\rangle$ be a shortest path (with regards to the number of edges) in $G'$ such that $v_1 = u_1$ and $v_k = u_t$. This situation is shown in Figure~\ref{tikz:chordless1}.

Suppose, by contradiction, that $p \cup q = \langle v_1 = u_1, u_2, \ldots, u_t = v_k, v_{k-1}, \ldots, v_2\rangle$ is not a chordless cycle. Thus, there exists a chord. Since $p$ is a chordless path and vertices of $p$ are vertices of $G'$, the chord must be of the form $(v_i, v_j)$ for some $i, j \in \{1, \ldots, k\}$, $i \neq j$. Moreover, we can assume that $i \leq j-2$. This leads to a path $\langle v_1, \ldots, v_i, v_j, \ldots, v_k\rangle$. Obviously, this path is also a path in $G'$ which is shorter than $q$. This yields the desired contraction. Therefore, $p \cup q$ is a chordless cycle of length $t+k-2$. Since $k \geq 2$, we have that $s = t+k-2 \geq t$.

(Necessity) Suppose there exists a chordless cycle $C_s = \langle u_1, u_2, \ldots, u_s\rangle$ in $G$ such that $s \geq t$. The chordless path $\langle u_1, u_2, \ldots, u_t\rangle$ clearly satisfies the required condition.\qed
\end{proof}

\begin{figure}
 \centering
 \begin{tikzpicture}[
  every node/.style={draw, 
                                 thick,
                                 circle,
                                 },
  scale=1.4,
  transform shape
]
  \node[label={[yshift=-5pt]above:{\scriptsize $u_2$}}] (Y1) {};
  \node[label={[yshift=-5pt]above:{\scriptsize $u_3$}},right of= Y1] (Y2) {};
  \node[label={[yshift=-10pt]above:{\scriptsize $u_{t-2}$}},right of= Y2,xshift=20pt] (YK5) {};
  \node[label={[yshift=-10pt]above:{\scriptsize $u_{t-1}$}},right of= YK5] (YK4) {};
  \node[label={[yshift=+5pt]below:{\scriptsize $u_1$}},below of= Y1] (X1) {};
  \node[label={[yshift=+5pt]below:{\scriptsize $v_2$}},right of= X1] (Z1) {};
  \node[label={[yshift=+10pt]below:{\scriptsize $v_{k-1}$}},right of= Z1,xshift=20pt] (ZQ) {};
  \node[label={[yshift=+5pt]below:{\scriptsize $u_t$}},right of= ZQ] (YK3) {};
  \draw[thick] (Y1.east) to (Y2.west);
  \draw[thick,dashed] (Y2.east) to (YK5.west);
  \draw[thick] (YK5.east) to (YK4.west);
  \draw[thick] (YK4.south) to (YK3.north);
  \draw[thick] (YK3.west) to (ZQ.east);
  \draw[thick,dashed] (ZQ.west) to (Z1.east);
  \draw[thick] (Z1.west) to (X1.east);
  \draw[thick] (X1.north) to (Y1.south);
 \end{tikzpicture}
\label{tikz:chordless1}
\caption{Sufficiency condition for the existence of a chordless cycle.}
\end{figure}
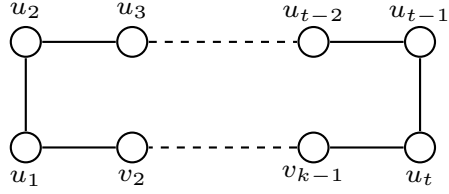

An ordering of the vertices of $G$ can be defined by a bijection $\ell: V \to \{1, 2, \dots, n\}$.  We call such a bijection a {\it vertex labeling}. Lemma~\ref{lemma_cycle} below states that any labeling enables a cycle to be defined in a unique way.

\begin{lemma}
\label{lemma_cycle}
Let $G$ be an undirected graph and $\ell\!:\!V \to \{1, 2, \dots, n\}$ a vertex labeling. If
 \begin{enumerate}[label=(\roman*), ref=(\roman*),leftmargin=2\parindent]
  \item\label{c-a} $G$ contains a simple cycle $\langle v_1, v_2, \dots, v_k \rangle$,
  \item\label{c-b} $\ell(v_2) =\allowbreak \min\{\ell(v_i) \>|\> i= 1, \dots, k\}$ and
  \item\label{c-c} $\ell(v_1) < \ell(v_3)$.
 \end{enumerate}
then $\ell$ defines the cycle in a unique way.
\end{lemma}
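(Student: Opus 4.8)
The plan is to read ``$\ell$ defines the cycle in a unique way'' as the statement that, among the $2k$ sequences representing a single cyclic subgraph, conditions~\ref{c-b} and~\ref{c-c} select exactly one. Recall from the preliminaries that a fixed cycle on $k$ vertices admits $2k$ sequence-representations: the $k$ rotations $\langle v_i, v_{i+1}, \dots, v_{i-1}\rangle$ together with their $k$ reflections. So I would regard two sequences as describing the same cycle precisely when they are related by these rotations and reflections, and then show that conditions~\ref{c-b} and~\ref{c-c} pick out a single representative from each such equivalence class.

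First I would invoke the injectivity of $\ell$: because $\ell$ is a bijection, the labels of the $k$ vertices are pairwise distinct, so there is a unique vertex $w$ attaining $\min\{\ell(v_i) \mid i = 1, \dots, k\}$. Condition~\ref{c-b} forces $w$ into the second position, which fixes $v_2 = w$ regardless of the chosen representation. Next, in the cyclic subgraph $w$ has exactly two neighbors, and these must be the vertices flanking position two, namely $v_1$ and $v_3$; their labels are again distinct by injectivity, so condition~\ref{c-c}, $\ell(v_1) < \ell(v_3)$, determines unambiguously that $v_1$ is the neighbor of smaller label and $v_3$ the neighbor of larger label. This pins down the first three entries and, with them, the orientation of traversal. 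Finally, once $v_1, v_2, v_3$ and the direction are fixed, each remaining entry is forced, since $v_{i+1}$ is the unique neighbor of $v_i$ on the cycle other than $v_{i-1}$; hence the whole sequence is uniquely determined. For existence, I would note that from any representation one can rotate so that the minimum-label vertex lands in position two and then, if needed, reflect to enforce~\ref{c-c}, so such a representative always exists.

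The hard part here is not the combinatorics, which collapses immediately once $\ell$ is seen to be injective, but rather making precise the notion of ``the same cycle'': one must commit to the equivalence relation generated by the rotations and reflections listed earlier and verify that conditions~\ref{c-b}--\ref{c-c} cut down each $2k$-element equivalence class to a single element. After that bookkeeping is in place, the uniqueness follows directly from the distinctness of labels.
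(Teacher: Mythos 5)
Your proposal is correct and follows essentially the same route as the paper's proof: identify the unique minimum-label vertex, observe that condition~\ref{c-b} forces it into the second position leaving only the two reflection-related representations, and use condition~\ref{c-c} to select exactly one of them. Your version is somewhat more careful — you make the $2k$-element equivalence class explicit, invoke the injectivity of $\ell$ for uniqueness of the minimum, and add the existence of a valid representative — but the core argument is identical.
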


\begin{proof}
Any cycle $\langle v_1, v_2, \dots, v_k \rangle$ can be described as $\langle v_i, \dots,\allowbreak v_k, v_1, v_2,\allowbreak  \dots, v_{i-1} \rangle$ or $\langle v_i, \dots, \allowbreak v_2, v_1, v_k, \dots, v_{i+1} \rangle$, for all $i = 1, \dots, k$. Let $i$ be a vertex index such that $\ell(v_{i}) = \min\{\ell(v_j)$ $|\> j=1, \dots, k\}$. There are only two possibilities for the vertex $v_i$ to be the second one of the cycle: $\langle v_{i-1},v_{i},v_{i+1}, \dots, v_k, v_1, v_2, \dots, v_{i-2} \rangle\>$ or $\>\langle v_{i+1},v_{i},v_{i-1}, \dots, v_2, v_1, v_k, \dots, v_{i+2} \rangle$. Since the neighbors of $v_i$ in the cycle are $v_{i-1}$ and $v_{i+1}$, exactly one of these possibilities satisfies the condition~\ref{c-c}.\qed
\end{proof}

We define a {\it triplet} as a sequence of vertices that can initiate a possible chordless cycle of length greater than three. Let $T(G)$ denote the set of all initial valid triplets of $G$, that is, $T(G) = \{ \langle x, u, y \rangle \mid x, u, y \in V \mbox{ with } x, y \in Adj(u)$, $\ell(u) < \ell(x) < \ell(y)$ and $(x, y) \notin E\}$.

The vertex labeling that we use, named {\it degree labeling}, is constructed over a sequence of subgraphs of $G$. We start with $G_1 = G$. For $i \geq 1$, the $(i+1)^{\mbox{th}}$ subgraph is defined as $G_{i+1} = G_{i} - u_{i}$, for a chosen $u_i \in V(G_i)$ such that $degree_{G_{i}}(u_{i}) = \delta(G_{i})$. Given such a sequence, we define the degree labeling as $\ell(u_i) = i$ for each $i$. Observe that  for any chosen labeling, if $G$ is a tree there are no possible triplets, that is, $T(G) = \varnothing$. Moreover, if $G$ has a unique cycle then $|T(G)| = 1$, no matter what degree labeling is used, that is, unneeded triplets are discarded.

Lemma~\ref{lemma_paths} below establishes the possible properties of a neighbor of the last vertex of a chordless path.

\begin{lemma} \label{lemma_paths}
Let $p = \langle v_1, v_2, \dots, v_k \rangle$ be a chordless path and $v \in Adj(v_k)$, $v \neq v_{k-1}$. Exactly one of the following occurs:
 \begin{enumerate}[label=(\roman*), ref=(\roman*),leftmargin=2\parindent]
   \item\label{p-a} $\langle p, v  \rangle = \langle v_1, v_2,\allowbreak \dots, v_k, v \rangle$ is a chordless path;
   \item\label{p-b}  $(v_{k-1},v) \in E$; or
   \item\label{p-c}  there exists $i\in\{1, \dots,\allowbreak k-2\}$ such that $p = \langle v_i, v_{i+1}, \dots, v_k, v \rangle$ is a chordless cycle.
 \end{enumerate}
\end{lemma}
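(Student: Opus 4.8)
The plan is to argue by a case analysis on how the new vertex $v$ attaches to the vertices of $p$, organizing the cases so that the ``exactly one'' clause falls out at the end. First I would record a preliminary observation: $v$ cannot already lie on $p$. Indeed $v \neq v_k$ (no loops) and $v \neq v_{k-1}$ by hypothesis; and if $v = v_i$ with $i \leq k-2$, then $(v_i, v_k) = (v, v_k) \in E$ would be a chord of $p$, contradicting that $p$ is chordless. Hence $v$ is a genuinely new vertex, so any sequence of the form $\langle v_j, \dots, v_k, v\rangle$ consists of distinct vertices.

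Next I would partition according to the set $S = \{\, i \in \{1, \dots, k-1\} : (v_i, v) \in E \,\}$ of earlier path-vertices adjacent to $v$. If $S = \varnothing$, then $v$ is adjacent to none of $v_1, \dots, v_{k-1}$, so the only edge joining $v$ to $\langle v_1, \dots, v_k\rangle$ is the path edge $(v_k, v)$; since $p$ itself is chordless, $\langle v_1, \dots, v_k, v\rangle$ is then a chordless path, giving case~\ref{p-a}. If $k-1 \in S$, then $(v_{k-1}, v) \in E$, which is exactly case~\ref{p-b}. The remaining possibility is that $S \neq \varnothing$ but $k-1 \notin S$; here I would set $i^{*} = \max\{\, i \in \{1, \dots, k-2\} : (v_i, v) \in E \,\}$ and claim that $\langle v_{i^{*}}, v_{i^{*}+1}, \dots, v_k, v\rangle$ is a chordless cycle, establishing case~\ref{p-c}. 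It is a cycle because $v_{i^{*}}, \dots, v_k$ is a subpath of $p$ closed up by the edges $(v_k, v)$ and $(v, v_{i^{*}})$; and it is chordless because no chord can occur among $v_{i^{*}}, \dots, v_k$ (they inherit chordlessness from $p$), while any chord incident to $v$ would be an edge $(v_j, v)$ with $i^{*} < j \leq k-1$, which is excluded by the maximality of $i^{*}$ (for $j \leq k-2$) and by $k-1 \notin S$ (for $j = k-1$).

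Finally I would verify that exactly one alternative holds by checking that the three are mutually exclusive. Cases~\ref{p-a} and~\ref{p-b} are incompatible since the former requires $(v_{k-1}, v) \notin E$ and the latter requires $(v_{k-1}, v) \in E$; likewise~\ref{p-a} and~\ref{p-c} are incompatible, since~\ref{p-a} forbids any edge from $v$ to $v_1, \dots, v_{k-1}$ whereas~\ref{p-c} demands one. The only delicate point, and the step I expect to be the real obstacle, is ruling out~\ref{p-b} together with~\ref{p-c}: if $(v_{k-1}, v) \in E$ and $\langle v_i, \dots, v_k, v\rangle$ were a chordless cycle for some $i \leq k-2$, then $v_{k-1}$ would be an interior vertex of that cycle whose cycle-neighbors are $v_{k-2}$ and $v_k$, so the edge $(v_{k-1}, v)$ would be a chord (note $v_{k-1} \neq v_i$ as $i \leq k-2$), contradicting chordlessness. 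This shows~\ref{p-c} cannot coexist with~\ref{p-b}, and since the partition by $S$ is visibly exhaustive, exactly one of the three alternatives occurs.
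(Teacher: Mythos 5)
Your proof is correct and follows essentially the same route as the paper's: observe that $v$ cannot lie on $p$, split on whether $(v_{k-1},v)\in E$ and whether $v$ has any neighbor among $v_1,\dots,v_{k-2}$, and in the last case take the largest such index to close off a chordless cycle. You are in fact more complete than the paper, which only establishes that at least one alternative occurs; your verification that the three cases are mutually exclusive (in particular that \ref{p-b} and \ref{p-c} cannot coexist, since $(v_{k-1},v)$ would be a chord of the cycle in \ref{p-c}) is a welcome addition.
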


\begin{proof}
Since $v \in Adj(v_k)$, $v \neq v_{k-1}$ and $p$ is a chordless path, then $\langle p, v\rangle$ is a simple path. Suppose that $(v_{k-1},v ) \notin E$ and that $\langle p, v\rangle$ is not a chordless path. Therefore, there is an index $i \in \{1, \ldots, k-2\}$ with $(v, v_i) \in E$. Choosing the biggest index $i$ with this property, we have the desired chordless cycle.\qed
\end{proof}

Case~\ref{p-a} states that path $\langle p, v\rangle$ can be part of a chordless cycle. Cases~\ref{p-b} and~\ref{p-c}, with $i \neq 1$, state that path $\langle p, v\rangle$ has a chord. In case~\ref{p-c} with $i=1$, $\langle p, v\rangle$ is a chordless cycle.


\section{Algorithms to find all chordless cycles}
\label{sequential}

The general principle of the proposed algorithms is to limit the search space by creating an initial set of valid vertex triplets ($T(G)$) and use a DFS strategy to identify chordless paths from each triplet in $T(G)$. At each step our strategy employs several optimization techniques to reduce the number of required tests before deciding what to do with the current expanded path $\langle p \rangle$. Among these techniques, the most relevant are blocking and labeling. The first one is an adaptation of a technique originally presented by Tiernan \cite{T1970} and subsequently used by Tarjan \cite{T1972}, Johnson \cite{J1975} and Read and Tarjan \cite{RT1975}. All these techniques are described in detail in~Appendix \ref{appendix-a}. The proposed algorithm is presented next.

The algorithm \textit{ChordlessCycles(G)} works as follows. It starts constructing a vertex labeling $\ell:V\to\{1,2,\dots, n\}$ and a set $T(G)$, of all initial valid vertex triplets (step~\ref{cc-step-1}). The set $T$ contains paths still open to expansion and is, at the start, comprises the initial valid vertex triplets (step~\ref{cc-step-2}).

The set $C$ is initialized (step~\ref{cc-step-3}) with all triangles (which are all chordless).


\subsection{Algorithm overview}
\label{algo-overview}

\begin{center}
\begin{minipage}{.95\textwidth}
\begin{algorithm2e}[H]
   \SetKwFunction{ccvisit}{CC\_Visit}
   \LinesNumbered

   \KwIn{Graph $G$.}
   \KwOut{Set $C$ of all chordless cycles of $G$.}
   \BlankLine

   $T(G) \leftarrow \{ \langle x, u, y \rangle \mid x, u, y \in V \mbox{ with } x, y \in Adj(u)$, $\ell(u) < \ell(x) < \ell(y)$ and $(x, y) \notin E\}$\nllabel{cc-step-1}\;
   \BlankLine

   $T \leftarrow T(G)$\nllabel{cc-step-2}\;
   $C \leftarrow \{ \langle x, u, y \rangle \mid x, u, y \in V \mbox{ with } x, y \in Adj(u)$, $\ell(u) < \ell(x) < \ell(y)$ and $(x, y) \in E\}$\nllabel{cc-step-3}\;
   \BlankLine

   \While{$T \neq \varnothing$}{\nllabel{cc-step-4}
      $p \leftarrow \langle u_1, u_2, \ldots, u_t \rangle \in T$\nllabel{cc-step-5}\;
      $T \leftarrow T - \{p\}$\nllabel{cc-step-6}\;
   \BlankLine

   \ForEach{$v \in Adj(u_t)$}{\nllabel{cc-step-7}
	\If{$((\ell(v) > \ell(u_2))\ \mathbf{and}\ (v \notin Adj(u_i),\> i \in \{2, \dots, t-1\}))$\nllabel{cc-step-8}}{
	  \eIf{$v \in Adj(u_1)$\nllabel{cc-step-9}}{
	    $C \leftarrow C \cup \{\langle p, v \rangle\}$\nllabel{cc-step-10}\;
	  }
	  {
	    $T \leftarrow T \cup \{\langle p, v \rangle\}$\nllabel{cc-step-11}\;
	  }
	}
       }
      \BlankLine
    }
   \BlankLine
   \Return{$C$.}

   \caption{\textit{ChordlessCycles(G)}
   \label{alg:cicloSemCordaAltoNivel}}
\end{algorithm2e}
\end{minipage}
\end{center}


Each initial triplet $\langle x, u, y \rangle\in T(G)$ is expanded using the neighbors of the vertex $y$ (steps \ref{cc-step-4}--\ref{cc-step-11}). The algorithm checks (step \ref{cc-step-8}) if the addition of a neighbor of $y$ to the path gives:
 \begin{enumerate}[label=(\roman*), ref=(\roman*),leftmargin=2\parindent]
   \item\label{case1} a chordless cycle;
   \item\label{case2} a chord in the current path; or
   \item\label{case3} another expansible path.
 \end{enumerate}

In case~\ref{case1}, the newly chordless cycle found is added to the set of cycles (step \ref{cc-step-10}); in case~\ref{case2}, the path is discarded and, in the last case, the expanded path is added to the set $T$ of expandable paths (step \ref{cc-step-11}). The same process is repeated until the set $T$ becames empty. An in-depth version of the algorithm \textit{ChordlessCycles(G)} is given in Appendix~\ref{appendix-a}.

In the implementation, we use a version of blocking up to facilitate the condition in step~\ref{cc-step-8}. As stated before, the idea of blocking was originally proposed by Tiernan \cite{T1970}. However, his strategy, consisting of the simple blocking and unblocking of vertices, is not enough to meet the needs of the proposed algorithms, because each vertex in the chordless path can be a neighbor of several others. Thus, we expand the concept and use a counter that indicates the number of times a vertex is found as a neighbor of some other in a chordless path. The idea here is that the current vertex can be considered unblocked only when all of its neighbors have already been processed.

For each $v\in V$, let $blocked[v]$ denote the number of neighbors of $v$ in a chordless path (without the first vertex). A vertex $v$ is said to be {\it unblocked} if $blocked[v] = 0$ and {\it blocked} otherwise.  At the beginning of the processing of a triplet, except for its first vertex, the neighbors of the other two vertices are marked as blocked and can no longer be used in the path. The goal of this approach is to block the neighbors that could form a chord with vertices in the chordless path. This strategy enables the extension of the path in a faster way. Upon completion of the processing of a triplet, all these neighbors are marked as unblocked. The blocking and unblocking operations are detailed in Algorithms~\ref{alg:block_neighbors} (\textit{BlockNeighbors()}) and~\ref{alg:unblock_neighbors} (\textit{UnblockNeighbors()}), respectively.

Algorithm~\ref{alg:cc_visit} (\textit{CC\_Visit()}) extends a chordless path from the last vertex of the triplet using a DFS strategy. At each step of the recursion, it checks what will be the result of the addition of each neighbor of the last vertex in the current path. If the addition results in a chordless cycle, it is added to the set of cycles already found and the recursion ends. If the addition results in an expanded chordless path, it is added to the set of paths to be analyzed. If it forms a chord, the path is discarded.

To ensure that each performed search finds a chordless cycle, the steps \ref{cc-step-7}--\ref{cc-step-8} are modified in order to use a breadth-first search (BFS) on the subgraph induced by the removal of vertices from the current path. We also discard all vertices $v \in V$ such that $\ell(v) \geq \ell(u_2)$. By using BFS we can verify in time $\mathcal{O}(n + m)$ that two vertices $u$ and $v$ belong to the same connected component. In this case, any chordless path can be extended to a chordless cycle.

The following algorithm extension demonstrates this modification, where $\pi(v)$ denotes the predecessor of vertex $v$ in the path of the generated search tree by the execution of BFS. Thus, given the path $\langle u_1, u_2, \ldots, u_t \rangle$ the expansion will be performed only if $u_t$ is a descendant of $u_1$ in the search tree (this is characterized by the existence of $\pi(u_t))$. In this case, a vertex $v$ adjacent to $u_t$ will be considered to expand this path only if $v$ is a descendant of $u_1$ in the search tree and $\ell(v)> \ell(u_2)$, which is ensured by Lemma \ref{lemma_graph_charac}.

Algorithms {\it ChordlessCycles(G)} [\ref{alg:chordless_cycles}] and {\it CC-Visit(G)} [\ref{alg:new_labeling}] can be easily changed so that the BFS is used, as shown in Algorithm \ref{alg:cicloSemCordaAltoNivel}.

\begin{center}
\begin{minipage}{.85\textwidth}
\setlength{\interspacetitleruled}{0pt}%
\setlength{\algotitleheightrule}{0pt}%
\begin{algorithm2e}[H]
      \NoCaptionOfAlgo
      BFS($u_1, G - \left(\bigcup\limits^{t-1}_{i=2}Adj[u_i] - \{u_1, u_t\} \right) - \{v \mid \ell(v) < \ell(u_2)\}$) \nllabel{cc2-step-1}\;
      \BlankLine

      \If{$(\exists \pi(u_t))$\nllabel{cc2-step-2}}{
	\ForEach{$v \in Adj(u_t)$}{\nllabel{cc2-step-3}
	  \If{$((\exists \pi(v))\ \mathbf{and}\ (\ell(v) >\ell(u_2)))$\nllabel{cc2-step-4}}{}}}

\end{algorithm2e}
\end{minipage}
\end{center}

Recall that by Lemma \ref{lemma_graph_charac}, if there exists a chordless path $p = \langle u_1, u_2, \ldots, u_t\rangle$ and a shortest path $q = \langle v_1, v_2, \ldots, v_k\rangle$ in the induced subgraph $G' = G - \left(\bigcup\limits^{t-1}_{i=2}Adj[u_i] - \{u_1, u_t\} \right)$ between $u_1$ and $u_t$, then $p \cup q$ forms a chordless cycle.

\subsection{Algorithm correctness}

The correctness of the algorithm is due to the fact that no vertex is kept blocked at the end of its execution, which is guaranteed by Lemma~\ref{lemma_block_CC-Visit} below.

\begin{lemma}
\label{lemma_block_CC-Visit}
Let $p = \langle u_1, u_2, \dots, u_t\rangle$ be a chordless path.
At the beginning and at the end of each execution of the algorithm \textit{CC\_Visit()}, $blocked[v] = k$ if and only if $v$ is a neighbor of $k$ vertices in $\{u_2, \dots, u_{t-1}\}$, for any vertex $v \in V$.
\end{lemma}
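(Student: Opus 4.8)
The plan is to read the statement as a recursion invariant and to prove it by induction on the structure of the recursion tree produced by CC\_Visit(). For a chordless path $p = \langle u_1, u_2, \dots, u_t\rangle$ write $I(p) = \{u_2, \dots, u_{t-1}\}$ for its set of interior vertices (the first vertex $u_1$ and the current last vertex $u_t$ are deliberately excluded), and say that the array $blocked$ is \emph{consistent with} $p$ when $blocked[v] = |\{\, u \in I(p) : v \in Adj(u)\,\}|$ holds for every $v \in V$. In this language the lemma asserts exactly that each invocation of CC\_Visit() both starts and finishes with $blocked$ consistent with the path $p$ on which it was called.

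First I would isolate the two elementary facts about the auxiliary routines on which everything rests. A call BlockNeighbors($u$) adds $1$ to $blocked[w]$ for every $w \in Adj(u)$ and UnblockNeighbors($u$) subtracts $1$ from $blocked[w]$ for every $w \in Adj(u)$; hence the two are exact inverses, and a single BlockNeighbors($u$) turns an array consistent with a set $S$ of interior vertices (with $u \notin S$) into one consistent with $S \cup \{u\}$, because it adds precisely the contribution of $u$ to the neighbor counts. These are the only two ways in which CC\_Visit() ever modifies $blocked$: the branches that record a chordless cycle or discard a chord leave the array untouched.

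I would then run the induction on the height of the recursion subtree rooted at the current call. For the base case I take a leaf invocation, which performs no recursive descent: since it neither blocks nor unblocks, it ends in the state in which it started, and I only have to know that this entry state is consistent with $p$. That entry condition is supplied by the caller: for the starting triplet $\langle u_1, u_2, u_3\rangle$ the set-up leaves $blocked$ consistent with the single-vertex interior $\{u_2\}$, and every deeper call is reached through a parent that has just performed BlockNeighbors on its own last vertex. For the inductive step, consider a call CC\_Visit($p$) entered with $blocked$ consistent with $I(p) = \{u_2, \dots, u_{t-1}\}$. For each neighbor $v$ of $u_t$ that the test certifies as an expandable extension, the routine executes BlockNeighbors($u_t$), which by the fact above makes $blocked$ consistent with $I(p) \cup \{u_t\} = I(\langle p, v\rangle)$; the recursive call CC\_Visit($\langle p, v\rangle$) is therefore entered in a consistent state and, by the induction hypothesis applied to the shorter subtree, returns in that same state; finally UnblockNeighbors($u_t$) cancels the increment and restores consistency with $I(p)$. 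Since this block/unblock pair is balanced around every descent and the cycle and chord branches touch nothing, the call exits with $blocked$ consistent with $I(p)$, exactly as it entered.

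The step I expect to demand the most care is the bookkeeping of which vertex changes role when the path grows: on passing from $p$ to $\langle p, v\rangle$ it is the former endpoint $u_t$ --- not the newly appended $v$ --- that joins the interior, so it is $u_t$ whose neighbor counts must be, and are, incremented by the single BlockNeighbors($u_t$) guarding the descent. I would make explicit alongside this that $u_1$ is never blocked (so its neighbors stay available to close a chordless cycle) and that the running endpoint is likewise kept out of $I(p)$; these two deliberate exclusions are precisely what pin the interior down to $\{u_2, \dots, u_{t-1}\}$, and any complete proof has to verify that the block/unblock pairing honors them uniformly at every level of the recursion.
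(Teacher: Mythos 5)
Your proof is correct and follows essentially the same idea as the paper's: the \textit{BlockNeighbors}/\textit{UnblockNeighbors} calls are paired around the recursion so that $blocked[v]$ is incremented exactly once for each interior vertex adjacent to $v$ and symmetrically decremented on exit. The paper states this in two informal lines, whereas you make it rigorous via an induction on the recursion tree; the only minor inaccuracy is that \textit{BlockNeighbors}$(u_t)$ is executed once at the top of \textit{CC\_Visit()} (bracketing the entire loop over neighbors) rather than once per descent, which changes nothing in the argument.
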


\begin{proof}
As in the proof of Lemma~\ref{lemma_paths}, it is not hard to see that at beginning of any CC\_Visit($p, C,$ $key$, $blocked$) execution, we have increased the counter $blocked[v]$ by 1 for all neighbors of each vertex in $\{u_2, \dots, u_{t-1}\}$; so, at the end of the process, we have also decreased each one of these values, ensuring that $blocked[v] = k$.\qed
\end{proof}

Let $\langle u_1, \dots, u_t\rangle$ be a chordless path and $v \in Adj(u_t)$. From Lemmas~\ref{lemma_paths} and~\ref{lemma_block_CC-Visit} it is easy to see that, after the call of algorithm \textit{BlockNeighbors()} to block the $u_t$ neighbors, $blocked[v] = 1$ if and only if $\langle p, v \rangle$ is a chordless path or a chordless cycle.

Using Lemmas~\ref{lemma_cycle}, \ref{lemma_paths} and \ref{lemma_block_CC-Visit} we now demonstrate the correctness of the proposed algorithm, as stated by the following theorem.

\begin{theorem}{\bf Correctness of the algorithm \textit{ChordlessCycles(G)}}.\\
The algorithm \textit{ChordlessCycles(G)} enumerates all the chordless cycles of a graph $G$.
\end{theorem}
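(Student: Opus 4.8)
The plan is to prove two complementary claims: \emph{(Completeness)} every chordless cycle of $G$ is output exactly once, and \emph{(Soundness)} every element placed into $C$ is indeed a chordless cycle. Soundness is the easier direction and follows almost immediately from the earlier lemmas. For the triangles placed into $C$ at step~\ref{cc-step-3}, the defining condition $(x,y)\in E$ together with $x,y\in Adj(u)$ makes $\langle x,u,y\rangle$ a cycle on three vertices, which is automatically chordless. For the longer cycles added at step~\ref{cc-step-10}, observe that every $p=\langle u_1,\dots,u_t\rangle$ ever placed into $T$ is a chordless path: this holds for the initial triplets by the definition of $T(G)$, and is preserved by the expansion step because the guard in step~\ref{cc-step-8} admits a neighbor $v$ of $u_t$ only when $v\notin Adj(u_i)$ for every $i\in\{2,\dots,t-1\}$. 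By Lemma~\ref{lemma_paths} and the remark following Lemma~\ref{lemma_block_CC-Visit}, whenever such a $v$ additionally satisfies $v\in Adj(u_1)$ (step~\ref{cc-step-9}), the sequence $\langle p,v\rangle$ closes into a chordless cycle, which is exactly what gets added to $C$.

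For completeness I would argue as follows. Let $D=\langle w_1,w_2,\dots,w_k\rangle$ be an arbitrary chordless cycle of $G$. By Lemma~\ref{lemma_cycle}, the vertex labeling $\ell$ singles out a \emph{unique} rotation/reflection of $D$, namely the representative $\langle v_1,v_2,\dots,v_k\rangle$ in which $v_2$ carries the globally minimum label among the cycle's vertices and $\ell(v_1)<\ell(v_3)$. For this canonical representative, the opening triplet $\langle v_1,v_2,v_3\rangle$ satisfies $v_1,v_3\in Adj(v_2)$, $\ell(v_2)<\ell(v_1)<\ell(v_3)$, and $(v_1,v_3)\notin E$ (the last because $D$ is chordless and $v_1,v_3$ are non-consecutive when $k\geq4$); hence $\langle v_1,v_2,v_3\rangle\in T(G)$ and enters $T$ at step~\ref{cc-step-2}. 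The core of the argument is then an induction on the prefix length showing that the algorithm, starting from this triplet, walks exactly along $D$: assuming the prefix $\langle v_1,\dots,v_j\rangle$ (with $3\le j<k$) has been produced and sits in $T$, the next cycle vertex $v_{j+1}\in Adj(v_j)$ passes the guard at step~\ref{cc-step-8} because $\ell(v_{j+1})>\ell(v_2)$ (minimality of $\ell(v_2)$) and $v_{j+1}$ is non-adjacent to every interior vertex $v_2,\dots,v_{j-1}$ (chordlessness of $D$). If $j+1<k$ then $v_{j+1}\notin Adj(v_1)$ and the longer prefix is pushed back into $T$ at step~\ref{cc-step-11}; if $j+1=k$ then $v_k\in Adj(v_1)$ and $\langle v_1,\dots,v_k\rangle$ is emitted at step~\ref{cc-step-10}. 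Thus $D$ is produced in its canonical form.

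Uniqueness of the output then reduces to Lemma~\ref{lemma_cycle}: any chordless cycle has exactly one canonical representative satisfying conditions~\ref{c-a}--\ref{c-c}, and the argument above shows $D$ is emitted through that representative and no other, since a cycle produced at step~\ref{cc-step-10} always has its minimum-labeled vertex in the second position and satisfies $\ell(u_1)<\ell(u_3)$ by the triplet condition it descends from. Hence each chordless cycle appears exactly once in $C$.

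I expect the main obstacle to be the inductive completeness step, specifically verifying that at no intermediate stage does a genuine cycle vertex get filtered out by the blocking mechanism. The subtlety is that the guard in step~\ref{cc-step-8}, realized in the implementation via the $blocked[\cdot]$ counters, must reject $v$ \emph{only} when $v$ would create a chord, and must never reject a legitimate continuation $v_{j+1}$. This is precisely where Lemma~\ref{lemma_block_CC-Visit} is needed: it guarantees that $blocked[v]$ accurately counts adjacencies to interior path vertices at the moment of testing, so the chordlessness of $D$ translates into $blocked[v_{j+1}]=0$ (or the appropriate low value after \textit{BlockNeighbors()}), ensuring $v_{j+1}$ survives the filter. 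Pinning down this correspondence between the counter state and the set-membership guard, across the recursive \textit{CC\_Visit()} calls, is the technically delicate part; the rotational bookkeeping from Lemma~\ref{lemma_cycle} and the case analysis from Lemma~\ref{lemma_paths} are then routine.
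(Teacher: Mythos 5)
Your proposal is correct and follows essentially the same route as the paper's proof: take the canonical representative of a chordless cycle guaranteed by Lemma~\ref{lemma_cycle}, observe that its opening triplet lies in $T(G)$, and then argue by induction on the prefix length, using Lemmas~\ref{lemma_paths} and~\ref{lemma_block_CC-Visit} to show that the guard (equivalently, the $blocked[\cdot]$ counters) admits exactly the legitimate continuations along the cycle. You are in fact somewhat more thorough than the paper, which only argues completeness explicitly and leaves soundness and the ``each cycle exactly once'' claim implicit, whereas you spell out both.
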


\begin{proof}
Let $C=\langle u_1, u_2, u_3, \dots, u_k, u_1\rangle$ be a chordless cycle of $G$. By Lemma~\ref{lemma_cycle} we can assume that $\ell(u_2) = \min\{\ell(u_i) \>|\> i=1, \dots, k\}$ and $\ell(u_1) < \ell(u_3)$. Therefore, the triplet $\langle u_1, u_2, u_3 \rangle$ is generated by the algorithm \textit{Triplets(G)}. Thus, the algorithm \textit{ChordlessCycles(G)} performs steps~\ref{cc-step-7} to~\ref{cc-step-11} with $p = \langle u_1, u_2, u_3 \rangle$.

Now let $\langle v_1, \dots, v_s = u_1, \dots, u_i\rangle$ be a chordless path and $v \in Adj(u_i)$. Combining Lemmas~\ref{lemma_paths} and~\ref{lemma_block_CC-Visit}, after the call to algorithm \textit{BlockNeighbors()} in \textit{CC\_Visit()} to block the neighbors of the vertex $u_i$, $blocked[v] = 1$ if and only if $\langle p, v \rangle$ is a chordless path or a chordless cycle. In the first case, the \textit{CC\_Visit()} will be called again until eventually $i=k$, finding the cycle; in the second case we already have the desired chordless cycle and it is added to the set $C$.\qed
\end{proof}

\subsection{Algorithm complexity}

Note that the depth of each search is at most the length of the longest chordless path. Moreover, the number of calls to BFS is limited by the output size. Since BFS is performed in $\mathcal{O}(n + m)$ time complexity, our algorithm has time complexity of $\mathcal{O}(n+m)$ in the output size. Actually, the BFS operates on a subgraph that diminishes with each iteration. The best algorithm of which we are aware to find all chordless cycles in a graph $G$~\cite{U2014} has time $\mathcal{O}(n \cdot (n + m))$ in the output size since it finds the same chordless cycle more than once.

In order to reduce the algorithm execution time, the biconnected components identification strategies presented by Tarjan~\cite{T1972} and Szwarcfiter~\cite{S1988}, that have $\mathcal{O}(n^2)$ time complexity, could be used. These strategies discard all vertices with $\delta(G) < 2$ and some paths that cannot lead to a chordless cycle.

\subsection{Algorithm improvement}

To improve the overall process even further, it is possible to preprocess each initial valid triplet $\langle x, u, y \rangle\in T(G)$. Recall that algorithm \textit{ChordlessCycles(G)} extends a chordless path only from the $y$ extremity. So, the preprocessing step could be used to try to extend the chordless path from the $x$ extremity.  This extension may be performed whenever there is only one vertex near to the extremity of the path, an unblocked neighbor, that can be added to it. Thus, the chordless path would be extended in a unique way and would not change the number of examined chordless paths. Moreover, this extension may reduce the work of the algorithm \textit{ChordlessCycles(G)}, since it blocks the neighbors of the visited vertices and they do not need to be examined anymore.

The blockings made from the $x$ extremity, after all possible extensions, prevent extensions to be made from the $y$ extremity to use any vertex already inserted in the opposite end, unless it forms a chordless cycle. They also avoid the usage of vertices that have been blocked, that would form chords on the path.

After a triplet processing, all blocked vertices in the extensions from $x$ and $y$ will be unblocked. Lemma \ref{lemma_block_PrimeExtend}, below, ensures that no vertex blocked from the $x$ extremity remains in this state after the processing of a triplet.

\begin{lemma}
\label{lemma_block_PrimeExtend}
Let $p = \langle x, u, y\rangle$ and $q = \langle v_1, v_2, \dots, (v_s = x) \rangle$ be paths in $G$, such that $\langle q, u, y \rangle$ is a chordless path. At the beginning of each execution of the extension step, for any vertex $v \in V$, $blocked[v] = k$ if and only if $v$ is a neighbor of $k$ vertices in $\{v_2, \dots, (v_s=x), u, y\}$.
\end{lemma}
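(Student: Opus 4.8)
The plan is to prove the stated invariant by induction on the number of extension steps performed from the $x$ extremity, following the same pattern used in the proof of Lemma~\ref{lemma_block_CC-Visit}. Recall that $blocked[v]$ is a counter that is incremented by one each time $v$ is encountered as a neighbor of a vertex that has just been fixed into the path (other than the current first vertex), and decremented by one when that vertex is released upon backtracking. It therefore suffices to show that the set of path vertices whose neighbors are currently accounted for in the counters is exactly $\{v_2, \dots, v_s = x, u, y\}$ at the start of each extension step.

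For the base case, before any extension from $x$ takes place the path is the initial triplet $\langle x, u, y\rangle$, whose first vertex is $x$. The preprocessing begins by calling \textit{BlockNeighbors()} on the two non-first vertices $u$ and $y$, so that $blocked[v] = k$ precisely when $v$ is adjacent to $k$ of the vertices in $\{u, y\}$. This matches the claimed set $\{v_2, \dots, v_s = x, u, y\}$ when $s = 1$, since the range $v_2, \dots, v_s$ is then empty.

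For the inductive step, I would assume the invariant holds at the start of an extension step with current path $\langle v_1, \dots, v_s = x, u, y\rangle$ and current first vertex $v_1$. Extending from the $x$ side consists of prepending a new unblocked neighbor of $v_1$; at this moment $v_1$ ceases to be the first vertex, so \textit{BlockNeighbors()} is invoked on $v_1$, incrementing $blocked[\,\cdot\,]$ by one for every neighbor of $v_1$. After reindexing, the set of path vertices whose neighbors are counted becomes $\{v_1, \dots, v_s = x, u, y\}$, which is exactly the claim for the extended path. When the extension step returns, \textit{UnblockNeighbors()} is called on $v_1$, restoring the counters to their previous values and re-establishing the invariant for the shorter path. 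Since every increment is matched by a decrement, no vertex is left blocked at the end of the processing of the triplet, exactly as in Lemma~\ref{lemma_block_CC-Visit}.

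The main obstacle I anticipate is the bookkeeping asymmetry relative to Lemma~\ref{lemma_block_CC-Visit}: here the path grows from the $x$ extremity rather than from the last vertex, so the identity of the ``first vertex'' — the one vertex whose neighbors are deliberately \emph{not} blocked — shifts with each prepended vertex, while $y$ remains permanently blocked throughout the preprocessing. Care is therefore needed to verify that the vertex newly blocked on extension is precisely the old first vertex $v_1$, and that it is exactly this vertex that is unblocked on return, so that the counter always reflects membership in $\{v_2, \dots, v_s = x, u, y\}$ and never double-counts or omits $v_1$ or $y$.
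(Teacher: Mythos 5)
Your proof is correct and follows essentially the same route as the paper's: an induction on the extension steps in which the base case blocks the neighbors of $u$ and $y$, and each subsequent prepending of a vertex blocks the neighbors of the old first vertex $v_1$, so that the counted set is always $\{v_2,\dots,v_s=x,u,y\}$. Your additional remarks on unblocking and on the shifting identity of the first vertex are consistent with (and slightly more explicit than) the paper's argument.
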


\begin{proof}
At the first execution of an extension step, the counter $blocked[v]$ is increased by 1 unit for all neighbors of $u$ and $y$.  Its next execution is performed after the path $q$ is augmented to $\langle z, q \rangle$, for some $z\in Adj(v_1)$, and the counter $blocked[v]$ is increased by 1 for all neighbors of $v_1$. Thus, the result holds for any execution of the extension step since the counter $blocked[v]$ is increased by 1 unit for all neighbors of each vertex in $\{v_2, \dots, (v_s=x), u, y\}$. \qed
\end{proof}

Let $\langle q, u, y \rangle = \langle v_1, v_2, \dots, (v_s\allowbreak = x), u, y \rangle$ be a chordless path and $v \in Adj(v_1)$. Combining Lemma~\ref{lemma_paths} with Lemma~\ref{lemma_block_PrimeExtend}, it is easy to see that $blocked[v] = 0$ if and only if $\langle v, q, u, y \rangle$ is a chordless path and $blocked[v] = 1$ if and only if $\langle x, u, y, v, q\rangle$ is a chordless cycle, for $(v,y) \in E$. In the last case, the cycle $\langle x, u, y, v, q\rangle$ is equivalent to the cycle $\langle v, q, u, y, v \rangle$. Furthermore, Lemma~\ref{lemma_block_PrimeExtend} states that no vertex will be kept blocked after the extension from the $x$ extremity, while Lemma~\ref{lemma_block_CC-Visit} guarantees that the same occurs when starting at the $y$ extremity.


\section{Experimental results}
\label{tests}

In the implementation of the algorithm $ChordlessCycles(G)$ we used two data structures: an adjacency matrix, that enables the verification of adjacency between two vertices in constant time, and also a compact representation of graphs as proposed by Harish and Narayanan~\cite{HN2007}.

The implementation and execution of the algorithm was performed using C++ and the g++ compiler on Linux openSUSE 12.3 ``Dartmouth'' operating system, a HP Proliant DL380 G7 Xeon Quad Core E5506 2.13GHZ with 40GB of RAM memory and 1.6 TB of disc.

The running time (in seconds) for the graphs presented in~\cite{SBBHN2012}, representing some ecological networks, named {\it food web}, and also other well-known graphs are shown in Table~\ref{tab:Tempo_Exec1}. The column labeled ``Name'' is the dataset name, $n$ is the number of vertices, $m$ is the number of edges, $\# clc$ is the number of chordless cycles of length of four or more and $C_3$ is the number of cycles of length 3 in the graph. $T_1$, $T_2$, $T_3$ and $T_4$ represent, respectively, the running time (in seconds) of the algorithm of Sokhn et al.~\cite{SBBHN2012} as presented originally by the authors and also when executed on our machine, of our algorithm and of its modified version with BFS. The symbol ``--'' in column $T_1$ refers to untested graphs by Sokhn et al.~\cite{SBBHN2012}. Therefore, we just present the results of their algorithm when it runs on our computer.

In Table~\ref{tab:More_information}, $l-clp$, $\#vis$ and $\#rec$ refer, respectively, to the longest chordless path of the graph,  the vertex visit quantity and the recursion carried out in a search for a chordless cycles.
 
\begin{table}[htpb]
\centering
\caption{Running time to enumerate all chordless cycles on niche-overlap graphs and on other well-known graphs.}
\label{tab:Tempo_Exec1}
\begin{tabular}{llcccccccc}
\hline
 \textbf{Id.}	& \textbf{Name} 		& \textbf{$n$} 		& \textbf{$m$} 	& \textbf{\#clc} 	& $C_3$	& \textbf{ $T_1$} 	& \textbf{ $T_2$} 	& \textbf{ $T_3$} & \textbf{ $T_4$}\\
\hline
1	& CrystalD 		& 16			& 86		& 0			& 293	& --			& 0.00			& 0.00	& 0.00 \\
2	& ChesUpper		& 24			& 85		& 0			& 167	& --			& 0.00			& 0.00	& 0.00  \\
3	& Narragan			& 26			& 168		& 0			& 586	& --			& 0.00			& 0.00	& 0.00 \\
4	& Chesapeake		& 27			& 90		& 0			& 157	& 0.00			& 0.00			& 0.00	& 0.00 \\
5	& Michigan			& 29			& 175		& 0			& 587	& --			& 0.00			& 0.00	& 0.00 \\
6	& Mondego			& 30			& 206		& 0			& 886	& --			& 0.00			& 0.00	& 0.00 \\
7	& Cypwet			& 53			& 842		& 0			& 8946	& 6.00			& 0.01			& 0.01	& 0.01 \\
8	& Everglades		& 58			& 1214		& 710			& 15627	& 7.00			& 0.03			& 0.03	& 0.04 \\
9	& Mangrovedry		& 86			& 2132		& 27426			& 30659	& 359.00			& 1.10			& 0.31	& 0.78 \\
10	& Floridabay		& 107			& 3249		& 85976			& 62389	& 4569.00			& 11.57		& 1.03	& 5.37 \\
\hline
11	& Goi\^ania		& 43		& 75		& 9311	& 5		& -- & 0.54			& 0.11	& 0.19 \\
12	& $C_{100}$		& 100	& 100	& 1		& 0		& -- & 0.00			& 0.00	& 0.00 \\
13	& Wheel 100		& 101	& 200	& 1		& 100	& -- & 0.00			& 0.00	& 0.00 \\
14	& $K_{8, 8}$		& 16		& 64		& 784	& 0		& -- & 0.00	& 0.00	& 0.00 \\
15	& $K_{50, 50}$	& 100	& 2500	& 1500625		& 0	& -- & 1.17			& 1.58	& 2.88 \\
16	& Grid 4 $\times$ 10		& 40		& 66		& 1823		& 0	& -- & 0.13			& 0.03	& 0.05 \\
17	& Grid 5 $\times$ 6		& 30		& 49		& 749		& 0	& -- & 0.01			& 0.00	& 0.01 \\
18	& Grid 5 $\times$ 10		& 50		& 85		& 52620		& 0	& -- & 2.20			& 0.60	& 1.13 \\
19	& Grid 6 $\times$ 6		& 36		& 60		& 3436		& 0	& -- & 0.07		& 0.02	& 0.06 \\
20	& Grid 6 $\times$ 10		& 60		& 104	& 800139		& 0	& -- & 37.79		& 9.15	& 16.83 \\
21	& Grid 7 $\times$ 10		& 70		& 123	& 8136453	& 0	& -- & 678.09		& 85.23	& 189.86 \\
\hline
\end{tabular}
\end{table}

\begin{table}[htpb]
\centering
\caption{More information about the algorithm execution.}
\label{tab:More_information}
\begin{tabular}{lccccp{1pt}cc}
 \hline
	& & & \multicolumn{2}{c}{\bf Algorithm without BFS}	&& \multicolumn{2}{c}{\bf Algorithm using BFS}\\
 \cline{4-5}\cline{7-8}\\[-8pt]
 \textbf{Id.}	& \textbf{$|T(G)|$}	& \textbf{l-clp}	& \textbf{\#vis}	& \textbf{\#rec}	&& \textbf{\#vis}	& \textbf{\#rec}\\
\hline
 1 	& 16		& 3	& 1 205		& 16 	&& 1 293 	& 16 \\
 2	& 31		& 3	& 7 119		& 307	&& 2 023 	& 31 \\
 3	& 59		& 3	& 9 646		& 215	&& 6 062	& 59 \\
 4	& 21		& 3	& 1 057		& 27	 	&& 1 102 	& 21 \\
 5	& 92		& 3	& 14 212		& 257	&& 9 563	& 92 \\
 6	& 80		& 3	& 35 582		& 826	&& 8 776	& 80 \\
 7	& 909	& 3	& 244 559		& 2 473	&& 193 457	& 909 \\
 8	& 1 877	& 6	& 1 008 576	& 10 283	&& 755 141	& 2 410\\
 9	& 4 095	& 8	& 23 211 495	& 226 078	&& 15 984 224 & 42 157\\
 10	& 5 837	& 8	& 108 212 550	& 685 492	&& 107 989 118	& 273 130 \\
 \hline
 11	& 31	 	& 28	& 1 298 846	& 128 623 && 1 278 623 & 36 785 \\
 12	& 1		& 100	& 587	& 97		&& 10 287 	& 97 \\
 13	& 1		& 100	& 980	& 97		&& 15 530 	& 97 \\
 14	& 140	& 4		& 5 740	& 140	&& 13 132 		& 140 \\
 15	& 61 250	& 4	& 15 373 750	& 61 250	&& 93 467 500 & 61 250\\
 16	& 27		& 28		& 412 944	& 44 846	&& 308 928 	& 9 648\\
 17	& 20		& 18		& 34 896	& 3 739	&& 75 669		& 2 365\\
 18	& 36		& 32		& 7 161 919& 729 706&& 7 334 301	& 225 960 \\
 19	& 25		& 20		& 178 672	& 18 671	&	& 362 840 & 10 833 \\
 20	& 45		& 36		& 108 866 318	& 10 696 603	&& 105 704 459 & 3 088 973 \\
 21	& 54		& 44		& 1 447 348 446	& 140 095 162	&& 1 128 865 130 & 30 945 512 \\
\hline
\end{tabular}
\end{table}


\begin{figure}[htpb]
\begin{tikzpicture}[
  dotstyle/.style={ column sep=.5cm,
                            row sep=.35cm,
                            nodes={circle,draw,fill=white}
                          },
  line1/.style={draw,black,very thin},
  line2/.style={draw,black,line width=3pt},
]
  \matrix (m) [matrix of nodes,ampersand replacement=\&,dotstyle]
  {
    {} \&    \&    \& {} \& {} \& {} \& {} \& {} \& {} \& {}   \& {}  \&      \\
       \& {} \&    \& {} \& {} \& {} \& {} \& {} \& {} \& {}   \&     \& {}   \\
       \&    \& {} \&     \& {} \& {} \& {} \& {} \& {} \&      \& {}  \&      \\
       \&    \&    \& {}  \& {} \& {} \& {} \& {} \& {} \& {}   \&     \&      \\
       \&    \&    \&     \& {} \& {} \& {} \& {} \& {} \&       \&     \&      \\
       \&    \&    \&     \&    \& {} \& {} \& {} \&     \&       \&     \&      \\
       \&    \&    \&     \&    \& {} \& {} \& {} \&     \&       \&     \&      \\
  };

  \draw [line1] (m-1-1) -- (m-1-4);
  \draw [line1] (m-1-4) -- (m-1-5);
  \draw [line1] (m-1-5) -- (m-1-6);
  \draw [line1] (m-1-6) -- (m-1-7);
  \draw [line1] (m-1-7) -- (m-1-8);
  \draw [line2] (m-1-8) -- (m-1-9);
  \draw [line2] (m-1-9) -- (m-1-10);
  \draw [line2] (m-1-10) -- (m-1-11);

  \draw [line1] (m-2-2) -- (m-2-4);
  \draw [line2] (m-2-4) -- (m-2-5);
  \draw [line1] (m-2-5) -- (m-2-6);
  \draw [line1] (m-2-6) -- (m-2-7);
  \draw [line1] (m-2-7) -- (m-2-8);
  \draw [line1] (m-2-8) -- (m-2-9);
  \draw [line1] (m-2-9) -- (m-2-10);
  \draw [line1] (m-2-10) -- (m-2-12);

  \draw [line2] (m-3-3) -- (m-3-5);
  \draw [line1] (m-3-5) -- (m-3-6);
  \draw [line1] (m-3-6) -- (m-3-7);
  \draw [line1] (m-3-7) -- (m-3-8);
  \draw [line1] (m-3-8) -- (m-3-9);
  \draw [line1] (m-3-9) -- (m-3-11);

  \draw [line1] (m-4-4) -- (m-4-5);
  \draw [line2] (m-4-5) -- (m-4-6);
  \draw [line2] (m-4-6) -- (m-4-7);
  \draw [line1] (m-4-7) -- (m-4-8);
  \draw [line1] (m-4-8) -- (m-4-9);
  \draw [line1] (m-4-9) -- (m-4-10);

  \draw [line1] (m-5-5) -- (m-5-6);
  \draw [line1] (m-5-6) -- (m-5-7);
  \draw [line1] (m-5-7) -- (m-5-8);
  \draw [line2] (m-5-8) -- (m-5-9);

  \draw [line2] (m-6-6) -- (m-6-7);
  \draw [line1] (m-6-7) -- (m-6-8);

  \draw [line1] (m-7-6) -- (m-7-7);
  \draw [line1] (m-7-7) -- (m-7-8);
  \draw [line1] (m-1-1) -- (m-2-2);
  \draw [line1] (m-1-4) -- (m-2-4);
  \draw [line1] (m-1-5) -- (m-2-5);
  \draw [line1] (m-1-6) -- (m-2-6);
  \draw [line1] (m-1-7) -- (m-2-7);
  \draw [line2] (m-1-8) -- (m-2-8);
  \draw [line1] (m-1-9) -- (m-2-9);
  \draw [line1] (m-1-10) -- (m-2-10);
  \draw [line2] (m-1-11) -- (m-2-12);

  \draw [line1] (m-2-2) -- (m-3-3);
  \draw [line2] (m-2-4) -- (m-3-3);
  \draw [line2] (m-2-5) -- (m-3-5);
  \draw [line1] (m-2-6) -- (m-3-6);
  \draw [line1] (m-2-7) -- (m-3-7);
  \draw [line2] (m-2-8) -- (m-3-8);
  \draw [line1] (m-2-9) -- (m-3-9);
  \draw [line2] (m-2-12) -- (m-3-11);

  \draw [line1] (m-3-3) -- (m-4-4);
  \draw [line1] (m-3-5) -- (m-4-5);
  \draw [line1] (m-3-6) -- (m-4-6);
  \draw [line1] (m-3-7) -- (m-4-7);
  \draw [line2] (m-3-8) -- (m-4-8);
  \draw [line1] (m-3-9) -- (m-4-9);
  \draw [line2] (m-3-11) -- (m-4-10);

  \draw [line1] (m-4-4) -- (m-5-5);
  \draw [line2] (m-4-5) -- (m-5-5);
  \draw [line1] (m-4-6) -- (m-5-6);
  \draw [line2] (m-4-7) -- (m-5-7);
  \draw [line2] (m-4-8) -- (m-5-8);
  \draw [line1] (m-4-9) -- (m-5-9);
  \draw [line2] (m-4-10) -- (m-5-9);

  \draw [line2] (m-5-5) -- (m-6-6);
  \draw [line1] (m-5-6) -- (m-6-6);
  \draw [line2] (m-5-7) -- (m-6-7);
  \draw [line1] (m-5-8) -- (m-6-8);
  \draw [line1] (m-5-9) -- (m-6-8);

  \draw [line1] (m-6-6) -- (m-7-6);
  \draw [line1] (m-6-7) -- (m-7-7);
  \draw [line1] (m-6-8) -- (m-7-8);

\end{tikzpicture}
  \caption{Simple representation of downtown Goi\^ania, Goi\'as, Brazil.}
  \label{fig:Goiania_Downtown}
\end{figure}
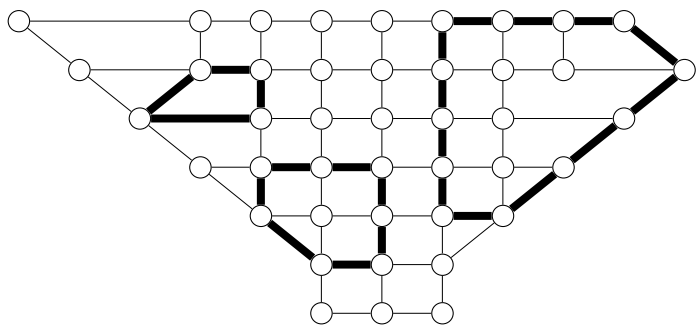


The first ten graphs (Ids. 1--10) were provided by known databases of ecological studies, in which the directed graph {\it food web} is transformed into undirected {\it niche-overlap} graph according to the definitions of Wilson and Watkins~\cite{WW1990}. However, in order to perform a fair comparison with the times obtained by Sokhn et al.~\cite{SBBHN2012}, we also excluded the vertices of degree 0 in the graphs of Table~\ref{tab:Tempo_Exec1}.

Figure~\ref{fig:Goiania_Downtown} shows a graph that represents the downtown of the city of Goi\^ania, the capital of state of Goi\'as, in Brazil. The running time to enumerate all chordless cycles of this graph is presented in line 11 of Table~\ref{tab:Tempo_Exec1}. Three of the 9316 chordless cycles present in this graph are highlighted in the figure.

We can observe in Table~\ref{tab:Tempo_Exec1}, the running time of our algorithm without BFS is faster than the BFS version.  Although, in principle, the search using BFS should have a smaller running time, this did not happens because BFS, which is used to verify the existence of a path between two vertices, can potentially require $\mathcal{O}(|V|) $ time complexity for every chordless path extension.


\section{Conclusions}
\label{conclusions}

%
%
%

We presented two algorithms (one with and the other without BFS) to enumerate all chordless cycles in a graph. Compared to other similar algorithms, they have the advantage of finding each chordless cycle only once. To ensure this, we introduced the concepts of vertex labeling and an initial valid vertex triplet. We also used an improved version of the concept of vertex blocking, previously introduced by other authors. The proposed algorithms use all these techniques together with a DFS strategy in a graph.

The first algorithm does not guarantee that the expansion of a given chordless path will always lead to a chordless cycle. To ensure this, we propose a second algorithm where a breadth-first search (BFS) is performed in a subgraph obtained by the elimination of all blocked vertices from the original graph. The algorithm using BFS has time complexity $\mathcal{O}(n + m)$ in the output size.

Although the algorithm that uses the BFS strategy ensures that no unsuccessful searches are performed, it is observed that, in practice, the algorithm without this technique leads to a smaller execution time.

For future work, we plan to improve the algorithms by implementing the expansion of each initial valid triplet from its two extremities. This will make the algorithms faster. Currently, a GPGPU parallel version of the algorithms described in this article is being implemented and tested.


\begin{acknowledgements}
We would like to thank Nayla Sokhn, University of Fribourg, who kindly helped us to understand the practical applications of chordless cycles in ecological networks and provided her database and algorithm for comparison. We also would like to thank Joe Jordan, Imperial College London and professors Hugo Alexandre Dantas do Nascimento and Leslie Richard Folds, from Universidade Federal de Goi\'{a}s, who gave some suggestions to improve the paper.

\end{acknowledgements}


\newpage

\appendix
\section{Detailed proposed algorithm}
\label{appendix-a}

To simplify the explanation, the algorithm is divided into six parts, denoted Algorithms~\ref{alg:chordless_cycles}--\ref{alg:unblock_neighbors}. The main part is Algorithm~\ref{alg:chordless_cycles} (\textit{ChordlessCycles(G)}). Initially, it calls Algorithm~\ref{alg:new_labeling} (\textit{DegreeLabeling()}) to generate a degree labeling for the vertices of the graph $G$, what allows the cardinality of the set of initial valid triplets $T(G)$, computed by Algorithm~\ref{alg:triplets} (\textit{Triplets(G)}), to be minimized.

\begin{center}
 \begin{minipage}{0.7 \textwidth}
  \begin{algorithm2e}[H]
   \DontPrintSemicolon
   \LinesNumbered
   \small

   \BlankLine
   \KwIn{Graph $G$.}
   \KwOut{Set $C$ of all chordless cycles of $G$.}
   \BlankLine

   $G \leftarrow$ DegreeLabeling($G$);\;\nllabel{cc-step1}

   $(T, C) \leftarrow$ Triplets($G$);\;\nllabel{cc-step2}

   \BlankLine

   \ForEach{$u \in V$}{\nllabel{cc-step4}
      $blocked(u) \leftarrow 0$.\;\nllabel{cc-step5}
   }

   \BlankLine

   \While{$(T \neq \varnothing)$}{\nllabel{cc-step6}

      $p \leftarrow \langle x, u, y\rangle\in T$;\nllabel{cc-step8}      \tcp*[h]{$p$ is a chordless path.}\;
      $T \leftarrow T - \{p\}$;\;\nllabel{cc-step7}
      \BlankLine
      BlockNeighbors($u$);\;\nllabel{cc-step10}
      
      $C \leftarrow$ CC-Visit($p, C, \ell(u)$, blocked);\;\nllabel{cc-step15}

      \BlankLine
      UnblockNeighbors($u$).\;\nllabel{cc-step17}
   }

   \BlankLine

   \Return $C$.\nllabel{cc-step20}

   \caption{ChordlessCycles($G$) \label{alg:chordless_cycles}}
  \end{algorithm2e}
 \end{minipage}
\end{center}

\begin{center}
 \begin{minipage}{0.8 \textwidth}
  \begin{algorithm2e}[H]
   \LinesNumbered
   \small

   \BlankLine
   \KwIn{Graph $G$.}
   \KwOut{A labeling of vertices of $G$.}
   \BlankLine

   \ForEach{$v \in V$}{\nllabel{nl-step1}
      	$degree(v) \leftarrow 0$\;\nllabel{nl-step2}
      	$color(v) \leftarrow white$.\nllabel{nl-step3}
	
     	\ForEach{$u \in Adj(v)$}{\nllabel{nl-step4}
		$degree(v) \leftarrow degree(v) + 1$.\nllabel{nl-step5}
	}
   }

    \BlankLine

   \For{$i = 1$ \KwTo $n$}{\nllabel{nl-step6}
        $min\_degree \leftarrow n$.\nllabel{nl-step7}
        
     	\ForEach{$x \in V$}{
		\If{$((color(x) = white)\ \mathbf{and}\ (degree(x) < min\_degree))$}{\nllabel{nl-step8}
			$v \leftarrow x$\;\nllabel{nl-step9}
			$min\_degree \leftarrow degree(x)$.\nllabel{nl-step10}
		}
	}
         \BlankLine
	$\ell(v) \leftarrow i$\;\nllabel{nl-step11}
	$color(v) \leftarrow black$.\nllabel{nl-step12}
         \BlankLine
      	 \ForEach{$u \in Adj(v)$}{\nllabel{nl-step13}
		\If{$color(u) = white$}{\nllabel{nl-step14}
			$degree(u) \leftarrow degree(u) - 1$.\nllabel{nl-step15}
		}
	 }
   }

   \BlankLine

   \Return $\ell$.\nllabel{nl-step16}

   \caption{DegreeLabeling($G$) \label{alg:new_labeling}}
  \end{algorithm2e}
 \end{minipage}
\end{center}

\begin{center}
 \begin{minipage}{0.8 \textwidth}
  \begin{algorithm2e}[H]
   \DontPrintSemicolon
   \LinesNumbered
   \small
   \caption{Triplets($G$) \label{alg:triplets}}

   \BlankLine
   \KwIn{Undirected simple graph $G$.}
   \KwOut{Set $T(G)$ of initial chordless paths and set $C$ of cycles of length 3.}
   \BlankLine
   $T(G) \leftarrow \varnothing$.\;\nllabel{t-step1}
   $C \leftarrow \varnothing$.\;\nllabel{t-step2}
   \BlankLine

   \ForEach{$u \in V$}{\nllabel{t-step3}
       \BlankLine

       \tcp*[h]{Generate all triplets on form $\langle x,u,y\rangle$}.

       \ForEach{$x, y \in Adj(u)\ \mathbf{such\ that}\ \ell(u) < \ell(x) < \ell(y)$}{\nllabel{t-step4}
           \eIf{$(x, y) \notin E$}{\nllabel{t-step5}
               $T(G) \leftarrow T(G) \cup \{\langle x, u, y\rangle\}$.\nllabel{t-step6}
           }
           {
           	   $C \leftarrow C \cup \{\langle x, u, y\rangle\}$.\nllabel{t-step7}	
           }
       }
   }

   \BlankLine
   \Return $(T(G), C)$.\nllabel{t-step8}
  \end{algorithm2e}
 \end{minipage}
\end{center}

\begin{center}
 \begin{minipage}{0.75 \textwidth}
  \begin{algorithm2e}[H]
   \DontPrintSemicolon
   \LinesNumbered
   \small
   \caption{CC\_Visit($p, C, key, blocked$) \label{alg:cc_visit}}

   \BlankLine
   \KwIn{Path $p = \langle u_1, u_2, \dots, u_t\rangle$ such that $p$ is a chordless path; set $C$ of chordless cycles; $key = \ell(u_2)$, that is the least value of this chordless path; and global array $blocked$.}
   \KwOut{Set $C$ of chordless cycles.}
   \BlankLine

   BlockNeighbors($u_{t}$).\nllabel{ccv-step1}

   \ForEach{$v \in Adj(u_t)$}{\nllabel{ccv-step2}
      \If{$((\ell(v) > key)\ \mathbf{and}\ (blocked(v) = 1))$}{
	  $p' \leftarrow \langle p, v\rangle$;\;\nllabel{ccv-step2-2}
	  \eIf{$((v, u_1) \in E)$}{\nllabel{ccv-step3}
	      $C \leftarrow C \cup \{ p'\}$;\;\nllabel{ccv-step4}
	  }
	  {
        $C \leftarrow$ CC-Visit($p', C, key$);\;\nllabel{ccv-step6}
	  }
     }
   }

  \BlankLine
  UnblockNeighbors($u_t$).\;\nllabel{ccv-step7}
  \BlankLine
  \Return $C$.\nllabel{ccv-step8}

  \end{algorithm2e}
 \end{minipage}
\end{center}

\begin{center}
 \begin{minipage}{0.75 \textwidth}
  \begin{algorithm2e}[H]
   \DontPrintSemicolon
   \LinesNumbered
   \small
   \caption{BlockNeighbors($v, blocked$) \label{alg:block_neighbors}}

   \BlankLine
   \KwIn{A vertex $v \in V$ and a globlal array $blocked$.}
   \KwOut{Blockade of all vertices on neighborhood of $v$.}
   \BlankLine

   \ForEach{$u \in Adj(v)$}{\nllabel{bn-step1}
      $blocked(u) \leftarrow blocked(u) + 1$.\;\nllabel{bn-step2}
   }

  \BlankLine

  \end{algorithm2e}
 \end{minipage}
\end{center}

\begin{center}
 \begin{minipage}{0.75 \textwidth}
  \begin{algorithm2e}[H]
   \DontPrintSemicolon
   \LinesNumbered
   \small
   \caption{UnblockNeighbors($v, blocked$) \label{alg:unblock_neighbors}}

   \BlankLine
   \KwIn{A vertex $v \in V$ and global array $blocked$.}
   \KwOut{Unblockade of all vertices on neighborhood of $v$.}
   \BlankLine

   \ForEach{$u \in Adj(v)$}{\nllabel{un-step1}
	\If{$(blocked(u) > 0)$}{\nllabel{un-step2}
		$blocked(u) \leftarrow blocked(u) - 1$.\;\nllabel{un-step3}
	}
   }

  \BlankLine

  \end{algorithm2e}
 \end{minipage}
\end{center}
\end{document}